\documentclass[letterpaper, 10 pt, journal, twoside]{IEEEtran}
\IEEEoverridecommandlockouts

\usepackage{graphicx}
\usepackage{stfloats}
\usepackage{url}
\usepackage{hyperref}
 \usepackage{amsmath,amssymb,amsfonts,subfigure,subcaption}
 \usepackage{xcolor}
 \usepackage{caption}
 \usepackage{floatrow}

\usepackage{amsthm}

\newtheorem{theorem}{Theorem}

\newtheorem{assumption}{Assumption}
\newtheorem{remark}{Remark}
\newtheorem{definition}{Definition}

\newtheorem{problem}{Problem}

\begin{document}
\title{Safety in Admittance Control using Reference Trajectory Shaping}

\author {Chayan Kumar Paul, Bhabani Shankar Dey*, and Indra Narayan Kar
\thanks{$^*$Corresponding author: Bhabani Shankar Dey}
\thanks{$^{1}$Chayan Kumar Paul is a PhD student in the Department of Electrical Engineering, Indian Institute of Technology Delhi, Hauz Khas, New Delhi, 110016, India
{\tt\small chayanpaul007@gmail.com}}
\thanks{$^{2}$Bhabani Shankar Dey is with Centre for Cyber-Physical Systems, Indian Institute of Science, Bengaluru, India
        {\tt\small bhabanishankar440@gmail.com}
        }%
\thanks{$^{3}$Indra Narayan Kar is a faculty in the Department of Electrical Engineering, Indian Institute of Technology Delhi, Hauz Khas, New Delhi, 110016, India
{\tt\small ink@ee.iitd.ac.in}}}

\maketitle

\begin{abstract}                
This paper presents a switched model-reference admittance control framework to achieve safe and compliant human–robot collaboration through reference trajectory shaping. The proposed method generates variable admittance parameters according to task compliance and task-space safety requirements. Additionally, a disturbance bound  is incorporated to enhance robustness against disturbances. Safety guarantees are explicitly established by integrating invariance control, ensuring that the reference trajectory remains within the admissible region. Stability of the switched system is analyzed using a common quadratic Lyapunov function, which confirms asymptotic convergence of the tracking error. The effectiveness of the approach is demonstrated through simulations on a two-link manipulator and comparison with existing methods are also presented. Furthermore, real-time implementation on a single-link manipulator validates the practical feasibility of the controller, highlighting its ability to achieve both compliance and safety in physical interaction scenarios.
\end{abstract}



\section{Introduction}
Industry 5.0 marks a new era in manufacturing, where robots are no longer confined to repetitive tasks in isolated spaces but are evolving into intelligent partners that share workspaces with humans, ensuring safety, comfort, and productivity. Its core vision is seamless human–robot collaboration, combining human cognitive strengths with robotic precision, speed, and flexibility. 
Impedance and admittance control strategies~\cite{hogan,abu2020variable,song2019tutorial} are widely adopted for this purpose, modeling the interaction as a virtual spring–mass–damper system. In particular, admittance control generates a reference trajectory based on the applied force, enabling smooth and compliant cooperation.
Based on the reference trajectories generated through admittance control, a variety of controllers have been developed for rehabilitation~\cite{Pehlivan2016Minimal,Yu2015Human}, exoskeletons~\cite{Li2018Asymmetric}, industrial applications~\cite{He2017Model,mariotti2019admittance} and many more.\\
In physical interaction scenarios, however, two key requirements dominate: compliance and safety. Compliance is typically ensured by tuning the virtual spring–mass–damper (or admittance) parameters according to the task, whereas safety requires guaranteeing that the system states remain within prescribed task-space constraints. A system is regarded as safe if its trajectories evolve entirely within a predefined admissible region. Existing approaches for safety-critical control in robotics generally fall into two categories: (i) path planning, which generates collision-free trajectories to ensure safety, and (ii) safety filters, which adjust controller inputs to enforce safety constraints. Safety through path planning has been applied primarily to mobile robots~\cite{al2003efficient}, but such heuristic-based approaches are often computationally expensive and time-consuming. In contrast, the development of safety filters has gained significant popularity due to their relative ease of implementation. Representative examples include Model Predictive Control (MPC) \cite{nubert2020safe}, Control Barrier Functions (CBF) \cite{cbf}, Barrier Lyapunov Functions (BLF) \cite{ablf}, and invariance control \cite{kimmel2017invariance}. Each of these approaches exhibits distinct advantages and limitations. MPC and CBF provide rigorous constraint satisfaction, but both require solving optimization problems online at every sampling instant, making real-time implementation computationally demanding, particularly in the presence of multiple constraints. BLF-based methods offer algebraic solutions; however, the control effort increases sharply near constraint boundaries, which can render the input infeasible~\cite{ablf}. Prescribed-performance control (PPC) \cite{meng2024adaptive} guarantees that the tracking error remains within a prespecified bound, which can also serve as a safety certificate. Nevertheless, implementing PPC in real time often requires very large control inputs, exceeding actuator limits. Moreover, both BLF and PPC lack mechanisms to handle constraint violations once they occur, potentially leading to instability or undesired behavior. Invariance control, in contrast, ensures task-space safety with feasible control effort by employing switching laws, but at the cost of oscillations in the system output\cite{wolff2007continuous}. Originally introduced for nonlinear control-affine SISO systems and later extended to multiple constraints, invariance control has also been applied to safe human–robot interaction~\cite{kimmel2017invariance}. However, this implementation exhibits significant chattering in the control input.\\
Most existing strategies implicitly assume that the reference trajectory always remains within the safe region. In practice, however, even with accurate system models, human-applied interaction forces can drive the system toward constraint violations. Since such forces are not treated as disturbances and cannot be entirely rejected, a natural solution is to adapt admittance parameters to preserve safety. Building on this idea,~\cite{chayansafety} proposed a switched model-reference admittance control framework in which reference parameters adapt dynamically according to task compliance. However, that approach did not provide formal safety guarantees. In contrast, the method proposed here explicitly integrates invariance control with a switching reference model, thereby ensuring safety. The resulting switching law constrains the generated reference trajectory within the admissible task-space set, simultaneously achieving compliance and safety with feasible control effort. Furthermore, a Lyapunov-based analysis establishes asymptotic convergence of the switched system, providing a rigorous theoretical foundation for the approach. The key contributions of this article are summarized as follows:
\begin{itemize}
    \item An adaptive switched reference model is proposed within a variable admittance control framework to simultaneously ensure compliance and constraint satisfaction. 
    \item Practical safety guarantees are established by constructing a control invariance set, ensuring that the reference trajectory remains within the admissible task-space.
    \item To enhance robustness against disturbances and measurement noise, a bound on the error dynamics is incorporated within the switching law, thereby preserving stability and performance.  
\end{itemize}

Furthermore, the effectiveness of the proposed method is validated through comparisons with BLF, PPC, and invariance control schemes. Also, a hardware implementation on a single-link manipulator demonstrates the practical applicability of the approach.  

\textbf{Notations:}  $\mathbb{R}$ is the set of real numbers. Low-order time derivatives are denoted by \( \dot{x} = \frac{dx}{dt} \),  \( \ddot{x} = \frac{d^2x}{dt^2} \) and the $k^{th}$-order time derivative is represented by \( x^{(k)} = \frac{d^k x}{dt^k} \) where \(k>2\). The notation $A \preceq B$ indicates that $A_i \leq B_i$ for all components $i$ of the vector $A$. For a vector $x \in \mathbb{R}^n$, the notation $|x|$ denotes the element-wise absolute value, i.e.,\(
|x| = \begin{bmatrix} |x_1| & |x_2| & \cdots & |x_n| \end{bmatrix}^\top . 
\) Also, $0$ and $I$ represent the zero and identity matrix of appropriate dimensions, respectively.

\section{Preliminaries and Problem Formulation}
\subsection{Set Invariance to ensure safety} \label{section:set-inva}

Invariance control~\cite{2008invariance} provides a systematic framework to guarantee safety in systems with state and input constraints. The core idea is to combine a nominal controller $u_{nom}$, designed for trajectory tracking within a safe set, with a corrective controller $u_{cor}$ that overrides the nominal action near the boundary of the safe set to preclude constraint violations. In this way, full actuator authority is leveraged to steer the state back into the admissible region and maintain safety.  

A central component of this approach is the notion of invariance set, which is a subset of the state space where safety needs to be guaranteed under suitable control. Formally, the positively controlled invariance set is defined as follows.
\begin{definition}[Positively controlled invariant set \cite{blanchini2008set}]
Consider the control system $\dot{x}=f(x,u)$ with admissible inputs $u(t)\in\mathcal{U}$. 
A (typically closed) set $\mathcal{C}\subseteq\mathbb{R}^{n}$ is \emph{positively controlled invariant} if, for every $x_0\in\mathcal{C}$, there exists an admissible input signal $u(\cdot)$ 
such that the corresponding trajectory satisfies $x(t;x_0,u)\in\mathcal{C}$ for all $t\ge 0$.
\end{definition}
A safe region can be mathematically defined by equality or inequality constraints on the measured output. Based on this, if the corrective control law ensures that \(\mathcal{C}\) is an invariance region with respect to the system dynamics, then safety is guaranteed or the constraints are satisfied. 
Consider a generalized nonlinear system
\begin{equation}
    \dot{x}=f(x)+g(x)u,\ \  x(0)=x_0,\label{eq.nonlinearsys}
\end{equation}
where \(x\in \mathbb{R}^n\) denotes the system states, \(u \in \mathbb{R}\) denotes the control input, and $f(x)$, $g(x) : \mathbb{R}^n \to \mathbb{R}^n$ are smooth vector fields.  Let the constraints on the states be given by
\begin{equation}
    y_i=h_i(x,\eta_i(t)) \leq 0, \ \ \ \ 1 \leq i \leq q,
    \label{eq.constraints}
\end{equation}
where \(h_i(\cdot):\mathbb{R}^n \times \mathbb{R} \to \mathbb{R}\) is a set of smooth output functions, $y_i$ is the output with a relative degree \(r_i\) and $\eta_i(t) \in \mathbb{R}$ is a time-varying user-defined parameter that characterizes the safe set. The set of all admissible states, denoted as $\mathcal{X}$, can be described as the intersection of the zero–sublevel sets associated with the output functions $h_i(x,\eta_i(t))$. Formally, this is expressed as
\begin{equation}
    \mathcal{X} = \{ x \in \mathbb{R}^n \;|\; h_i(x,\eta_i(t)) \leq 0,\; \forall i = 1,\dots,q \}.
    \label{eq.admissible}
\end{equation}

It is known that $\mathcal{X}$ can be rendered invariant with finite control input only if $r=1$. For higher relative degrees, the approach of~\cite{wolff2004invariance} constructs an admissible set $\mathcal{G}$ that remains invariant through control switching on the boundary $\partial\mathcal{G}$. The set is defined as 
\begin{equation}
    \mathcal{G} = \{ x \;|\; \Phi(x) \leq 0 \}, \qquad 
    \partial \mathcal{G} = \{ x \;|\; \Phi(x)=0 \},
\end{equation}
where $\Phi(x)$ depends on the output $y$, its derivatives $y^{(k)}$ for $k<r$, and a design parameter $\gamma$ ($\gamma=0$ if $r=1$, $\gamma<0$ otherwise). Larger $|\gamma|$ enlarge $\mathcal{G}$ but generally increase the required control effort. \\
The detailed construction of the invariant set is provided in~\cite{wolff2004invariance} and is omitted here for brevity. Based on those results, explicit expressions for $\Phi(x)$ can be derived for systems with low relative degrees, as follows:
\begin{align}
    r=1: & \quad \Phi(x)=y,\\
    r=2: & \quad \Phi(x)=
    \begin{cases}
        y, & \dot{y}\leq 0,\\
        -\tfrac{1}{2\gamma }\dot{y}^2 + y, & \dot{y}>0.
    \end{cases} \label{eq.Phidef}
\end{align}
A sufficient condition for invariance of $\mathcal{G}$~\cite{wolff2004invariance} is that, on $\partial\mathcal{G}$, either
\begin{align}
    y^{(k)}(x) < 0,\quad 1\leq k \leq r-1, \quad \text{or} \quad 
    y^{(r)}(x,u)\leq \gamma,
\end{align}
with the latter enforceable via input–output linearizing control.\\
For multiple constraints, individual sets $\mathcal{G}_i$ can be constructed and combined into a single invariant set
\begin{equation}
    \Phi_{max}(x) = \max_i \Phi_i(x), \qquad 
    \mathcal{G} = \{ x \;|\; \Phi_{max}(x)\leq 0 \},
\end{equation}
which corresponds to the intersection of all $\mathcal{G} _i$ and guarantees admissibility with respect to every constraint~\cite{wolff2007continuous}.

\subsection{System Description and Problem Formulation}
\subsubsection{Robot Dynamics}Consider a rigid manipulator model with \(n\) revolute joints operating in a \(m-\)dimensional task-space
\begin{equation}
M(q)\ddot{q}+C(q,\dot{q})\dot{q}+G(q)=\tau_c+J(q)^Tf_{ext}, \label{eq.ELDynamics}
\end{equation}
where \(q(t),\dot{q}(t),\ddot{q}(t) \in \mathbb{R}^n\) are the joint angle, joint velocity, and joint acceleration vector, respectively. $M(q),C(q,\dot{q}) \in \mathbb{R}^{n \times n}$ denote the inertia matrix and the Coriolis matrix, respectively and $G(q) \in \mathbb{R}^n$ represents the gravity vector.  $\tau_c(t) \in \mathbb{R}^n$ is the control torque provided by the actuators and $f_{ext} \in \mathbb{R}^m$ is the external force experienced by the end-effector due to human-robot interaction. It is assumed that $\|f_{ext}\|<\bar{f}_{ext}$  is known a-priori. $J(q) \in \mathbb{R}^{m \times n}$ is the Geometric Jacobian matrix. The computed torque control input to linearize \eqref{eq.ELDynamics} is  \begin{equation}
    \tau_c=M(q)v+C(q,\dot{q})\dot{q}+G(q)-J(q)^Tf_{ext}, \label{eq.controlinput}
\end{equation}
which results in the closed-loop system as \(\ddot{q}=v.\) Further, we can use \(v=J(q)^{\dag} u-J(q)^{\dag} \dot{J}(q) \dot{q}\)  to obtain the following form:
\begin{equation}
    \ddot{\xi}=u,
\end{equation}
where $\xi \in \mathbb{R}^m$ denotes the end-effector pose and $J(q)^{\dag}$ is the pseudo-inverse of $J(q)$. Now the first objective is to design the control input $u$ such that a desired admittance relation (force to position) is ensured.
\subsubsection{Admittance Relation}The admittance relation \cite{admittance} against the external force (\(f_{ext}\)) is given as follows:
\begin{equation}
    M_a(\ddot{\xi}-\ddot{\xi}_d)+D_a(\dot{\xi}-\dot{\xi}_d)+K_a(\xi-\xi_d)=f_{ext}, \label{eq.admittance}
\end{equation}
where $M_a,D_a,K_a \in \mathbb{R}^{m \times m}$ are the virtual mass, damping, and stiffness matrices, respectively. The predefined desired trajectory in the task-space is denoted as $\xi_d \in \mathbb{R}^m$, i.e., in the absence of any external force, the end-effector should follow $\xi_d.$ Note that $M_a, D_a,$ and $K_a$ are user-defined symmetric positive definite matrices to obtain task-specific compliance. The control input $u$ is designed such that the admittance relation \eqref{eq.admittance} holds, as given below
\begin{equation}
    u=\ddot{\xi}_d-M_a^{-1}(D_a\dot{e}+K_ae-u_c), \quad \label{eq.erroradmittance}
\end{equation}
where $e \triangleq \xi-\xi_d$ and $u_c \in \mathbb{R}^m$ is an auxiliary control input. 
\begin{remark}
   It can be highlighted that using the control input as mentioned in \eqref{eq.erroradmittance}, the admittance relation \eqref{eq.admittance} can be achieved. However, if the desired pose $\xi$ is required to be in a safe set, the same design doesn't cater to that need. To achieve the desired admittance relation along with safety satisfaction following problem is defined.
\end{remark}
\begin{assumption}
    The desired state \( \xi_d \in \mathcal{S}_d\) is within the user-defined safe set \( \mathcal{S}_r \), i.e., $\mathcal{S}_d \subset \mathcal{S}_r$. 
\end{assumption}
\begin{problem}
Given the robot dynamics in \eqref{eq.ELDynamics}, the objective is to design a safe controller $u_c$ such that the admittance relation in \eqref{eq.admittance} is satisfied and pose $\xi$ stays within a user-defined safe state, i.e., \( \xi \in \mathcal{S}_r \) \(\forall t \geq 0\).
\end{problem}
To address this problem, we adopt a two-phase solution. First, a switched model-reference adaptive control framework is formulated to ensure that the reference model trajectory remains within the prescribed safe set $\mathcal{S}$. Subsequently, an error bound $\mathcal{S}_d$ is constructed so that the actual end-effector position is guaranteed to lie within the enlarged safe region $\mathcal{S}_r = \mathcal{S} \oplus \mathcal{S}_d$. This has been illustrated in Figure~\ref{fig:picrepre}
\begin{figure}
    \centering
    \includegraphics[width=\linewidth]{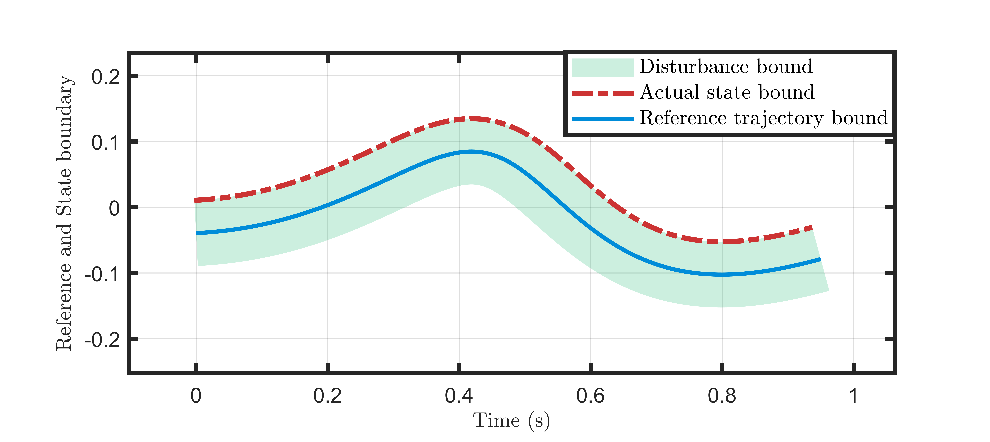}
    \caption{The bound of the set $\mathcal{S}$ is shown in blue solid line, the green region shows the disturbance bound $\mathcal{S}_d$ and the red dot-dashed line shows the actual state boundary, i.e., the boundary of $\mathcal{S}_r.$}
    \label{fig:picrepre}
\end{figure}
\section{Safe Set Construction}
\subsection{Reference Trajectory Shaping} 
Using the control input \eqref{eq.erroradmittance}, the closed-loop system can be expressed in the state-space form as:
\begin{equation}
    \dot{E}=A_a E +B_a u_c, \label{eq.statespace1}
\end{equation}
where \(E=\begin{bmatrix}
    e \\ \dot{e}
\end{bmatrix} \in \mathbb{R}^{2m}\) represents the state vector. The system matrix and the input coupling matrices are defined as
\begin{equation}
    A_a\triangleq\begin{bmatrix}
        0 & I\\ -M^{-1}_aK_a & -M^{-1}_aD_a
    \end{bmatrix},\ B_a\triangleq\begin{bmatrix}
        0 \\ M^{-1}_a
    \end{bmatrix}.\label{eq.sysmatrix}
\end{equation}
The objective here is to design a control such that the safety constraint \(E \in \mathcal{S}\) is satisfied, which can be obtained from $\mathcal{S}_r \ominus\mathcal{S}_d.$
The safety set can be mathematically formulated as
\begin{equation}
    \mathcal{S}=\{\ E \in \mathbb{R}^{2m} | h(E,\eta(t)) \leq 0 \} , \label{eq:hyperplane}
\end{equation}
where \(\eta(t)\) is the user-defined parameter for the formation of the compact set $\mathcal{S}$. An invariance control-based algorithm  is developed in \cite{kimmel2017invariance} to solve this problem where a switching control law is proposed; however, due to the discontinuity in the control law, oscillations are induced in the system trajectory and the control input. To eliminate the undesirable oscillations, our work introduces a switching reference model instead of switching control law, 
which guarantees that \(E \in \mathcal{S}\).
The reference model is chosen to be Piece-wise Affine (PWA) and given as:
\begin{equation}
    \dot{E}_r=A_{p}E_r + B_a f_{ext}, \quad p\in P,\label{eq.refmodels}
\end{equation}
where 
\begin{itemize}
    \item \(E_r \in \mathbb{R}^{2m}\) represents the reference model state vector,
    \item $P=\{ 1, 2\}$ is a finite index set and $\sigma:\mathbb{R}^{2m} \to P$ is the state-dependent switching law that signifies which subsystem is active.
    \item $A_{p} \in \mathbb{R}^{2m \times 2m}$ is the matrix corresponding to the subsystem with the following structure
    \begin{equation}
        A_{p}=\begin{bmatrix}
            0 & I \\ diag[k_{1p},...,k_{mp}] &diag[d_{1p},...d_{mp}]
        \end{bmatrix}.
    \end{equation}
\end{itemize}

\begin{remark}
    The interaction force is assumed to be applied from a single direction. With force along $E_{r_1}$, the sign of $f_{ext_1}$ defines two half-spaces; thus two PWA subsystems suffice. For multi-directional forces, appropriate hyperplanes must be designed accordingly.
\end{remark}
In order to incorporate a switching reference model, two subsystems are defined next.
\subsubsection{Reference Subsystem-1}
For $p=1$, the \textit{Subsystem 1} of the reference model is active and defined as:
\begin{align}
    \dot{E}_r = A_{1} E_r + B_a f_{ext},
\end{align}
which corresponds to a low stiffness profile for better compliance or according to the task. One can also choose $A_1=A_a$. 
\subsubsection{Reference Subsystem-2}Further, to ensure safety, \textit{Subsystem 2} with high stiffness for $p=2$ is defined as 
\begin{align}
    \dot{E}_r=A_2 E_r+B_af_{ext}.
\end{align}
The user-defined parameter matrix $A_2$ is selected such that the following inequality holds:  
\begin{equation}
    2|B_a f_{ext}| \preceq |A_2 \bar{E}_r|, \label{eq.A2cond}
\end{equation}  
where $\bar{E}_r$ is the solution of $h(\bar{E}_r,\bar{\eta}(t))=0$. Here, $\bar{\eta}(t)$ is chosen to satisfy the condition $\bar{\eta}(t)+\bar{D} \leq \eta(t)$ for all $t$, with $\bar{D}$ defined later. Next, define  
\begin{equation}
    \gamma \triangleq A_{2}E_{r} + B_{a}f_{\text{ext}},
\end{equation}
where $A_{2}$ is chosen as a negative definite matrix satisfying~\eqref{eq.A2cond}.  
With this choice, the inequality $\gamma < 0$ holds for all cases.  
Under this assumption, we construct the following function, analogous to~\eqref{eq.Phidef}:

\begin{equation}
    \Phi_1(E_r) =
    \begin{cases}
        h(E_r,\bar{\eta}(t)),  \qquad \quad  \ \ \ \  \ \ \ \ \ \  \dot{h}(E_r,\bar{\eta}(t)) \leq 0, \\
        -\dfrac{1}{2\gamma}\dot{h}^2(E_r,\bar{\eta}(t)) + h(E_r,\bar{\eta}(t)), \quad  \text{otherwise}.
    \end{cases}
\end{equation}  

Using $\Phi_1(E_r)$, the following set is defined:  
\begin{equation}
    \mathcal{G}_1 = \{ E_r \mid \Phi_1(E_r) \leq 0 \}. \label{eq.invariancefunc}
\end{equation}  

The set $\mathcal{G}_1$ is a \emph{positively invariant set} when subsystem~2 is active. Consequently, switching to subsystem~2 guarantees that $h(E_r,\bar{\eta}(t)) \leq 0$ is always satisfied. However, to ensure \eqref{eq:hyperplane}, we need to consider the tracking error dynamics too, which is given in the following section.
\subsection{Error bounds for Safe set}
Choose the control input as
\begin{equation}
    u_c=K_p E+f_{ext}             \label{eq:control}
\end{equation}
where $K_p$ is the control gain with the following assumption.


\begin{assumption}
    $(A_a, B_a)$ pair is controllable. Hence, it is assumed that $\exists$  nominal control gains $K_p^*$ such that $A_a+B_a K_p^*$ is Hurwitz and the following matching condition holds
\begin{equation}
    A_{p}=A_a+B_a K_{p}^* \quad \forall p. \label{matching}
\end{equation}
\end{assumption}
The tracking error ($e_a=E-E_r$) dynamics is given as
\begin{align}
    \dot{e}_a = &(A_a+B_aK_p)E+B_a f_{ext}-A_{p}E_r -B_a f_{ext}\\
                =&A_{p}e_a+B_a\tilde{K}_pE,
\end{align}
where $\tilde{K}_p \triangleq K_p-K_p^*$. Also, from \eqref{eq:hyperplane}, the upper bound on the error can be obtained such that $\|B_a\|\|\tilde{K}_p\|\|E\|\leq D$, where $D$ is some positive constant. Now the worst case scenario for the error dynamics becomes
\begin{equation}
    \dot{e}_a=A_{p}e_a+D\textbf{1}_{2m},      \label{eq:errordyn}
\end{equation}
where $\textbf{1}_{2m}$ is the vector of all-ones. With the initial condition $E_r(0)=E(0)$, the solution of \eqref{eq:errordyn} can be obtained as 
\begin{equation}
    e_a(t)=D\int_0^t e^{A_{p}(t-\tau)} d\tau,
\end{equation}
which can be upper bounded as follows
\begin{equation}
    |e_a|\preceq D \begin{bmatrix}
        \frac{b_{11}\lambda_{12}-b_{12}\lambda_{11}}{k_{11}\Delta_1}\\
        .\\ \frac{b_{i1}\lambda_{i2}-b_{i2}\lambda_{i1}}{k_{i1}\Delta_i}\\
        .\\ \frac{b_{m1}\lambda_{m2}-b_{m2}\lambda_{m1}}{k_{m1}\Delta_m}\\
         \frac{b_{12}-b_{11}}{\Delta_1}\\
         .\\\frac{b_{i2}-b_{i1}}{\Delta_i}\\
         .\\ \frac{b_{m2}-b_{m1}}{\Delta_m}\\
    \end{bmatrix} \label{eq.errorbound}
\end{equation}
where the terms \( b_{i1} = e^{\lambda_{i1} t} - 1 \), \( b_{i2} = e^{\lambda_{i2} t} - 1 \), with eigenvalues \( \lambda_{i1} = \frac{k_{i2} - \Delta_i}{2} \) and \( \lambda_{i2} = \frac{k_{i2} + \Delta_i}{2} \), where \( \Delta_i = \sqrt{k_{i2}^2 + 4k_{i1}} \) for \( i = 1, \dots, m \). 

Taking the derivative of the \( i \)-th row of the upper half of the vector in \eqref{eq.errorbound} with respect to time yields
\begin{equation}
\frac{d}{dt} \left( \frac{b_{i1} \lambda_{i2} - b_{i2} \lambda_{i1}}{k_{i1} \Delta_i} \right)
= \frac{\lambda_{i1} \lambda_{i2} (e^{\lambda_{i1} t} - e^{\lambda_{i2} t})}{k_{i1} \Delta_i} > 0.
\end{equation}

This shows that the first \( m \) components of the bound vector are monotonically increasing functions of \( t \). Therefore, the maximum of these expressions as \( t \to \infty \) is given by:

\begin{equation}
\max_{t > 0} \left\{ \frac{b_{i1} \lambda_{i2} - b_{i2} \lambda_{i1}}{k_{i1} \Delta_i} \right\}
= \lim_{t \to \infty} \left\{ \frac{b_{i1} \lambda_{i2} - b_{i2} \lambda_{i1}}{k_{i1} \Delta_i} \right\}
= -\frac{1}{k_{i1}}.
\end{equation}

For the last \( m \) rows of the vector in Equation~(26), the difference \( \frac{b_{i2} - b_{i1}}{\Delta_i} \) approaches zero both as \( t \to 0 \) and \( t \to \infty \), i.e.,

\begin{equation}
\left. \frac{b_{i2} - b_{i1}}{\Delta_i} \right|_{t = 0}
= \left. \frac{b_{i2} - b_{i1}}{\Delta_i} \right|_{t \to \infty}
= 0.
\end{equation}
From Rolle's theorem, there exists a unique stationary point

\begin{equation}
t_s = \frac{\ln \frac{\lambda_{i2}}{\lambda_{i1}}}{\lambda_{i1} - \lambda_{i2}},
\end{equation}

in the open interval \( (0, \infty) \), at which the first derivative of \( \frac{b_{i2} - b_{i1}}{\Delta_i} \) with respect to \( t \) is zero. At this point, the second derivative of the same function is negative, indicating a local maximum. Thus, the maximum value of \( \frac{b_{i2} - b_{i1}}{\Delta_i} \) occurs at
$t_s$.
Accordingly, the maximum value of the expression \( \frac{b_{i2} - b_{i1}}{\Delta_i} \) can be defined as

\begin{equation}
\beta_i = \max_{t > 0} \left\{ \frac{b_{i2} - b_{i1}}{\Delta_i} \right\} = \frac{e^{\lambda_{i2} t} - e^{\lambda_{i1} t}}{\Delta_i} \bigg|_{t = \frac{\ln \frac{\lambda_{i2}}{\lambda_{i1}}}{\lambda_{i1} - \lambda_{i2}}}.
\end{equation}
Now, defining a set as follows
\begin{equation}
    \mathcal{D}=\left\{ e_a\in \mathbb{R}^{2m}:|e_a|\leq \bar{D}\right\} \label{eq.errorset}
\end{equation}
where $\bar{D}=max(D\begin{bmatrix}
        -\frac{1}{k_{11}} & ...&-\frac{1}{k_{m1}} &\beta_1 &...&\beta_m
    \end{bmatrix} ).$\\
The results established above are stated formally in the theorem below.

\begin{theorem}
    The states of the system \eqref{eq.erroradmittance} tracks the reference trajectory generated by the model \eqref{eq.refmodels} asymptotically while ensuring \(\xi \in \mathcal{S}_r\), if the following indicator function and the control input is implemented.
    \begin{enumerate}
        \item The indicator function is given by
        \begin{equation}
            p=\begin{cases}
                1,\ \ \ \mathcal{G}_1 \oplus \mathcal{D}\in \mathcal{S}_r   \text{ and } \dot{h}(E_r,\bar{\eta}(t)) \leq 0\\
                2,\ \ \ otherwise.
            \end{cases}\label{eq.indicatorfn}
        \end{equation}
        \item  The control input \(u_c\) is given by
        \begin{equation}
             u_c = K_pE + f_{\text{ext}},
        \end{equation}
        where $K_p$ is a dynamically updated gain given by:
        \begin{equation}
            \dot{K}_r = -\Gamma_r B_a^T \mathcal{P} e_a E^T,  \label{eq.Kuplaw}
        \end{equation}
        where $\Gamma_p$ is a  diagonal positive definite gain matrix following the switching rule, and $\mathcal{P}>0$ is a common quadratic Lyapunov function satisfying
        \[
        A_{p}^T \mathcal{P} + \mathcal{P} A_{p} = -I, \quad \forall P.
        \]
  \end{enumerate}
\end{theorem}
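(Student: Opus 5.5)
The proof splits into two parts: (i) asymptotic convergence of the tracking error $e_a=E-E_r$ under the switched reference model and the adaptive law \eqref{eq.Kuplaw}; (ii) safety, i.e. $\xi\in\mathcal{S}_r$ for all $t\ge 0$, obtained by combining invariance of $\mathcal{G}_1$ for the reference state with the error bound set $\mathcal{D}$ in \eqref{eq.errorset}.

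For (i), I will work with the error dynamics already derived, $\dot e_a=A_p e_a+B_a\tilde K_p E$, where $\tilde K_p=K_p-K_p^*$. Because of the matching condition \eqref{matching}, $K_p^*$ may differ between modes, so $\tilde K_p$ jumps at switching instants. I will therefore first argue that, since both $A_p$ share the block structure and $B_a$ has full column rank in its lower block, the relevant parameter estimate $K_r$ can be taken to be a single estimate whose ideal value is piecewise constant. Between switches $\dot{\tilde K}_p=\dot K_r$. I then take the common Lyapunov candidate
\begin{equation*}
V=e_a^T\mathcal{P}e_a+\mathrm{tr}\big(\tilde K_p^T\Gamma_p^{-1}\tilde K_p\big).
\end{equation*}
Using $A_p^T\mathcal{P}+\mathcal{P}A_p=-I$ for all $p$ and the trace identity $e_a^T\mathcal{P}B_a\tilde K_pE=\mathrm{tr}(\tilde K_p^TB_a^T\mathcal{P}e_aE^T)$, the update law cancels the cross term and gives $\dot V=-\|e_a\|^2\le 0$ on every interval where the mode is fixed. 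To handle switching, I will use that the switching is state-dependent and driven by the safety indicator \eqref{eq.indicatorfn}. I will argue either that only finitely many switches occur, or that with a mode-independent ideal gain (the common $\mathcal{P}$ and the structure) $V$ does not increase at switching instants, so $V$ is nonincreasing overall. Boundedness of $e_a$ and $\tilde K_p$ then follows. $E_r$ is bounded because both $A_p$ are Hurwitz, $\mathcal{G}_1$ is invariant under subsystem 2, and $f_{ext}$ is bounded; hence $E$ and $\dot e_a$ are bounded. Barbalat's lemma applied to $\int_0^\infty\|e_a\|^2dt\le V(0)$ yields $e_a\to 0$.

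For (ii), I use the indicator function. Whenever $\dot h(E_r,\bar\eta)>0$ or the inflated set leaves $\mathcal{S}_r$, subsystem 2 is active. By the construction of $\Phi_1$ with $\gamma<0$ guaranteed by \eqref{eq.A2cond}, and by the sufficient invariance condition of Section~\ref{section:set-inva} with $r=2$, $\mathcal{G}_1$ is positively invariant. Hence $h(E_r,\bar\eta(t))\le 0$ for all $t$. Under subsystem 1, $\dot h\le 0$ holds, so $h$ cannot increase. With $E_r(0)=E(0)$, the worst-case comparison system \eqref{eq:errordyn} and the monotonicity and Rolle arguments preceding the theorem give $|e_a(t)|\preceq\bar D$, i.e. $e_a\in\mathcal{D}$ for all $t$. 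Then $E=E_r+e_a\in\mathcal{G}_1\oplus\mathcal{D}$. Since $\bar\eta+\bar D\le\eta$, this set lies in $\mathcal{S}$ enlarged by $\mathcal{S}_d$, i.e. in $\mathcal{S}_r$, which gives $\xi\in\mathcal{S}_r$.

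The main obstacle is the switching step in (i). The ideal gains $K_p^*$ differ across modes, so $\tilde K_p$ jumps and $V$ may increase at switches. My first attempt will be to bound the jump by $\|K_1^*-K_2^*\|$ and show that switching into mode 2 happens only finitely often, because after convergence the reference stays strictly inside the safe region. Failing that, I would invoke a dwell-time or average-dwell-time argument.
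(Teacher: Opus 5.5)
\textbf{Genuine gap in part (i).} Your part (ii) follows the paper's own informal reasoning: invariance of $\mathcal{G}_1$ under subsystem 2, plus the error envelope $\mathcal{D}$, giving $E\in\mathcal{G}_1\oplus\mathcal{D}$. Part (i), however, has a gap at exactly the point you flag.

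With a single estimate $K_r$, the ideal value necessarily changes at every switch. The matching condition \eqref{matching} gives $B_a(K_1^*-K_2^*)=A_1-A_2\neq 0$, and $B_a$ has full column rank, so $K_1^*\neq K_2^*$. Your option of a ``mode-independent ideal gain'' is therefore unavailable. $\tilde K$ jumps by $K_1^*-K_2^*$ at each switch, and $V$ can increase.

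Neither fallback closes this.
\begin{itemize}
\item \emph{Finite switching} is not implied by anything in the setting. The rule \eqref{eq.indicatorfn} is driven by $E_r$, $f_{ext}$ and the time-varying $\bar\eta(t)$, not by $e_a$. Convergence of $e_a$ therefore says nothing about how often the human force pushes $E_r$ toward the boundary, and the rule may even switch arbitrarily fast near $\dot h=0$. Appealing to ``after convergence'' is also circular.
\item \emph{Dwell-time and average-dwell-time} theorems need $\dot V\le -\lambda V$ in each mode. Here $\dot V=-\|e_a\|^2$ is only negative semidefinite in $(e_a,\tilde K)$. The parameter error does not decay, so the positive jumps are never compensated. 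At best you would get boundedness, not asymptotic tracking.
\end{itemize}

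\textbf{The missing idea.} The paper keeps \emph{one gain estimate per mode}; this is what the sum over $p$ in its Lyapunov function encodes. There are two estimates $K_1,K_2$, each with its own constant ideal value $K_p^*$ and its own $\Gamma_p$. Only the active one is integrated by \eqref{eq.Kuplaw}; the inactive one is frozen. The Lyapunov function then carries the parameter-error terms of \emph{both} modes:
$V=\tfrac12 e_a^T\mathcal{P}e_a+\tfrac12\sum_{p=1}^2\mathrm{tr}(\tilde K_p^T\Gamma_p^{-1}\tilde K_p)$.

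Why this works:
\begin{itemize}
\item $e_a$ is continuous, and each $\tilde K_p$ is continuous because it is either frozen or integrated and $K_p^*$ is constant. Hence $V$ has no jumps at switching instants.
\item On each interval, the frozen term contributes nothing to $\dot V$. The active term cancels the cross term $e_a^T\mathcal{P}B_a\tilde K_pE$. The common $\mathcal{P}$ then gives $\dot V=-\tfrac12\|e_a\|^2$.
\end{itemize}
So $V$ is nonincreasing for \emph{any} switching sequence, with no dwell time or finiteness of switching required. Your Barbalat step then goes through as planned.

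One small correction: boundedness of $E_r$ under arbitrary switching should be drawn from the common $\mathcal{P}$ together with bounded $f_{ext}$. Each $A_p$ being Hurwitz is not enough.
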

\begin{proof}
    Consider the Lyapunov function
    \begin{equation}
        V=\frac{1}{2}e_a^T\mathcal{P}e_a+\frac{1}{2}\sum_{p=1}^2(tr(\tilde{K}_p\Gamma_p^{-1}\tilde{K}_p))
    \end{equation}
    where $\Gamma_p$ is a diagonal positive definite gain matrix.
    The time derivative of the Lyapunov function along \eqref{eq.statespace1} is 
    \small\begin{align}
    \dot{V} &= e^T\bigg(\frac{1}{2}\sum_{p=1}^2 (A_{p}^T\mathcal{P}+\mathcal{P}A_{p})\bigg)e+\sum_{p=1}^2( E^TPB_a\tilde{K}_{p}e_a ) \nonumber \\
    &+\sum_{p=1}^2 tr(\tilde{K}_{p}^T \Gamma_p^{-1}\dot{\tilde{K}}_{p}).\label{vdot}
\end{align}
Choosing $\dot{K}_i$ as given in \eqref{eq.Kuplaw}, the time derivative becomes negative semi-definite as follows
\begin{equation}
    \dot{V}=-\frac{1}{2}e_a^Te_a \leq 0.
\end{equation}
Since, $e_a$ and $\dot{e}_a$ are bounded and $\dot{V}$ is uniformly continuous, using Barbalat's lemma \cite{barbalat}, we can prove the asymptotic convergence of the error. The system state can be upper-bounded as
\begin{align}
    \|E\| \leq \|E_r\|+\|e_a\|.
\end{align}
Now the goal is to choose a switching sequence such that $E_r \in \mathcal{S}_r \ominus \mathcal{D}$ following \eqref{eq.errorset}. 
 Hence, the switching sequence can be
\begin{align}
    p= \begin{cases}
        1, \quad \mathcal{G}_1 \oplus \mathcal{D} \in \mathcal{S}_r \ and \ \ \dot{h}(E_r,\bar{\eta}(t))\leq 0\\
        2, \quad otherwise.
    \end{cases}
\end{align}
This completes the proof.
\end{proof}

\section{Results and Discussion}
\subsection{Simulation Setup}

In the simulation study, we have considered a two-link frictionless manipulator operating in a horizontal plane, subject to task-space constraint. All physical quantities are expressed in SI units. The robot dynamics is modeled following~\cite{craig1987adaptive} and given below:
\begin{align}
\tau_1 &= m_2 l_2^2 \left(\ddot{q}_1 + \ddot{q}_2\right) 
       + m_2 l_1 l_2 c_2 \left(2\ddot{q}_1 + \ddot{q}_2\right) - \tau_{e1} \notag \\
       &\quad + (m_1+m_2) l_1^2 \ddot{q}_1 
       - m_2 l_1 l_2 s_2 \,\dot{q}_2^2 
       - 2 m_2 l_1 l_2 s_2 \,\dot{q}_1 \dot{q}_2, \notag \\[6pt]
\tau_2 &= m_2 l_2^2 \left(\ddot{q}_1 + \ddot{q}_2\right) 
       + m_2 l_1 l_2 c_2 \,\ddot{q}_1 
       + m_2 l_1 l_2 s_2 \,\dot{q}_1^2 
       - \tau_{e2}, \notag
\end{align}
where $\tau_1$ and $\tau_2$ denote the joint torques, 
$q_1, q_2$ are the joint angles, $m_1, m_2$ are the link masses, 
$l_1, l_2$ are the link lengths, and $\tau_{e1}, \tau_{e2}$ are the external torques due to human-robot interaction. For compactness, we define $c_i = \cos(q_i)$, $s_i = \sin(q_i)$, $s_{ij} = \sin(q_i+q_j)$, and $c_{ij} = \cos(q_i+q_j)$. The link parameters are given as $m_1 = 1.5~\mathrm{kg}$, $m_2 = 1.0~\mathrm{kg}$, and $l_1 = l_2 = 0.3~\mathrm{m}$.

\medskip
The  task-space constraints require that the end-effector position $[\xi_1(t),\, \xi_2(t)]^\top$ to satisfy
\begin{align}
|\xi_1(t)| &< k_{c1}(t) = 0.25 + 0.01 \sin(-0.75t), \label{eq.xbound} \\[4pt]
|\xi_2(t)| &< k_{c2} = 0.35, \qquad \forall\, t \geq 0,
\end{align}
with the desired end-effector location specified as
\(
\xi_d = \begin{bmatrix}0.1 \\ 0.2\end{bmatrix}\,\mathrm{m}.
\) The purpose of selecting such a constraint is to highlight that the proposed controller can seamlessly handle time-varying constraints.

The human-robot interaction force applied at the end-effector is defined as
\[
f_{ext}(t) = 
\begin{bmatrix} f_{e1}(t) \\ f_{e2}(t) \end{bmatrix},
\]
with each component described by
\begin{equation}
f_{e_i}(t) =
\begin{cases}
0, & t < 10 \;\;\text{or}\;\; t \geq 21, \\[6pt]
a_i \big(1 - \cos(0.3\pi t)\big), & 10 \leq t < 11, \\[6pt]
2a_i, & 11 \leq t < 20, \\[6pt]
a_i \big(1 + \cos(0.3\pi t)\big), & 20 \leq t < 21,
\end{cases}
\label{eq:external_force}
\end{equation}
where $a_1 = 1$ and $a_2 = 0$.

\begin{figure}
    \centering
    \includegraphics[width=\linewidth]{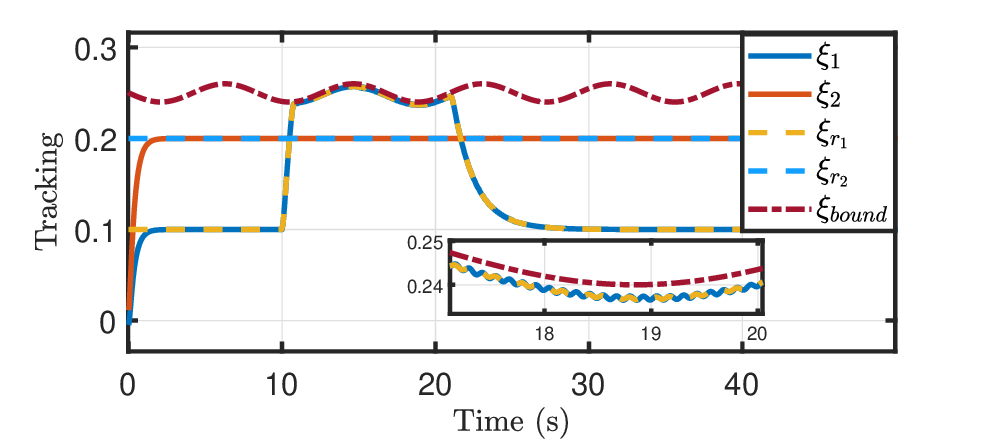}
    \caption{Tracking Response by proposed controller maintaining safety constraint where $\xi_1,\xi_2$ are the end-effector position, $\xi_{r_1}, \xi_{r_2}$ are the reference trajectories, and $\xi_{bound}$ is the bound as given in \eqref{eq.xbound}.}
    \label{fig:mractrack}
\end{figure}
\begin{figure}
    \centering
    \includegraphics[width=\linewidth]{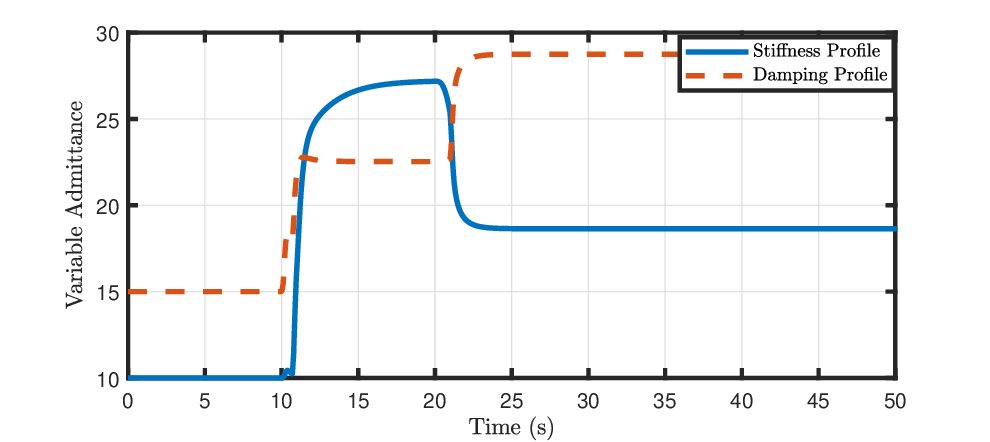}
    \caption{Variable stiffness and damping profiles under switched MRAC design}
    \label{fig:admprof}
\end{figure}
For admittance profile \eqref{eq.admittance} to maintain compliance (i.e. Reference Subsystem-1) is chosen as follows:
\begin{equation}
    A_1=\begin{bmatrix}
        0 & I \\ diag(10,10) & diag (15,15)
    \end{bmatrix},B_a=\begin{bmatrix}
        0\\I
    \end{bmatrix},
\end{equation}
and for Reference Subsystem-2 to enforce safety as
\begin{equation}
    A_2=\begin{bmatrix}
        0 & I \\ diag(40,40) & diag (50,50)
    \end{bmatrix}.
\end{equation}
We further introduce an external disturbance term defined as
\begin{equation}
    d(t) = 0.1 \sin(50t) + 0.05 r_o, \qquad 15 < t < 25,
\end{equation}
where $r_o \in [0,1]$ denotes a random number generated at each time instant.  
As shown in Fig.~\ref{fig:mractrack}, the reference trajectory remains within the safety set prescribed by the indicator function~\eqref{eq.indicatorfn}, where the envelope for disturbance is chosen as $\bar{D} = \tfrac{0.15}{10} = 0.015$.  

Figure~\ref{fig:admprof} illustrates the variable stiffness and damping profile resulting from the switched MRAC design. The controller gain and Lyapunov matrix are selected as
\begin{equation}
    \Gamma_p =
    \begin{bmatrix}
        10 & 0 \\[3pt]
        0 & 8
    \end{bmatrix}, \forall p \qquad
    \mathcal{P} =
    \begin{bmatrix}
        244.4 & 35.05 \\[3pt]
        35.05 & 42.54
    \end{bmatrix}.
\end{equation}

For performance comparison, the proposed method is benchmarked against three state-of-the-art constrained control approaches:  
(i) the ABLF method~\cite{tee2010adaptive},  
(ii) the PPC-based adaptive admittance tracking~\cite{meng2024adaptive}, and  
(iii) invariance control~\cite{hirche}. For this comparison, the task-space constraints are considered time-invariant and defined as
\begin{align}
|\xi_1(t)| &< k_{c1} = 0.35, \\[4pt]
|\xi_2(t)| &< k_{c2} = 0.35, \qquad \forall\, t \geq 0,
\end{align}
while the external interaction force is taken to be the same as specified in~\eqref{eq:external_force}.

\begin{figure}
    \centering
    \includegraphics[width=\linewidth]{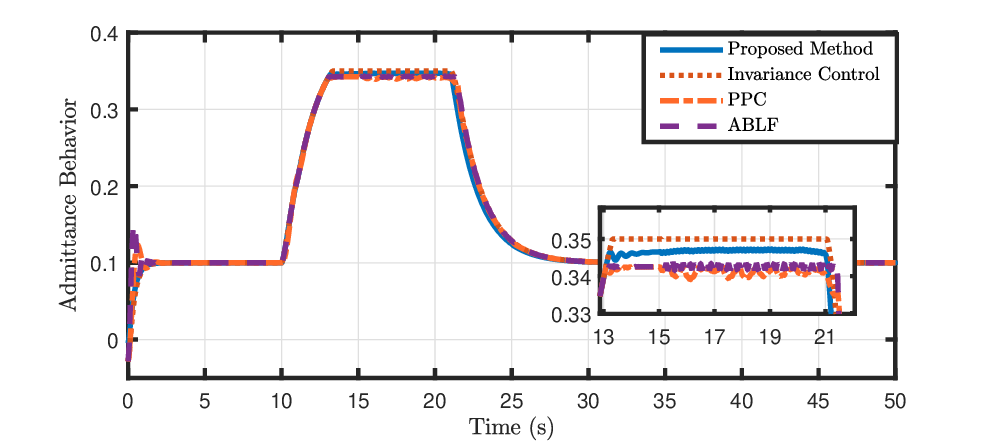}
    \caption{Trajectory along the X-axis: comparison of the proposed controller (blue solid line), invariance controller (red dotted line), PPC (orange dot–dashed line), and ABLF (purple dashed line)}
    \label{fig:trackingcomp}
\end{figure}
\begin{figure}
    \centering
    \includegraphics[width=\linewidth]{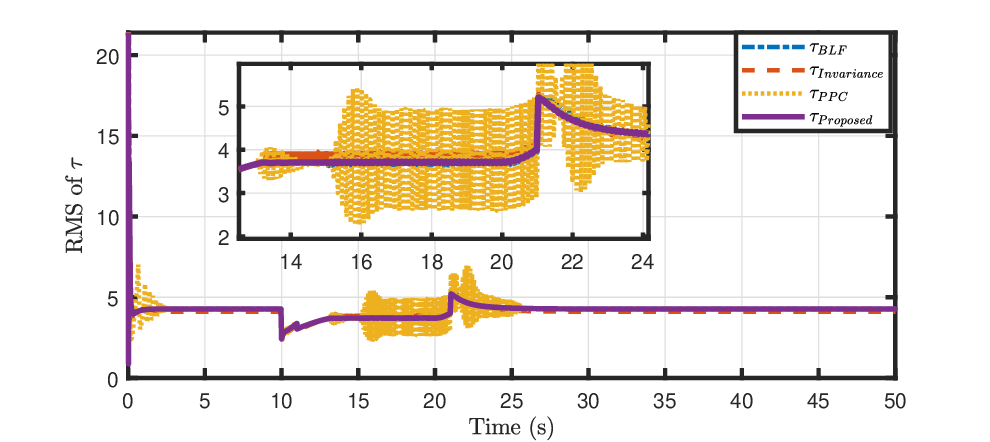}
    \caption{Control input comparison of different controllers}
    \label{fig:contcomp}
\end{figure}
From Fig.~\ref{fig:trackingcomp}, it can be observed that the invariance controller~\cite{kimmel2017invariance} achieves the best admittance behavior under external disturbances while preserving safety. However, this performance comes at the cost of significant control chattering, as highlighted in Fig.~\ref{fig:contcomp}. In particular, the root-mean-square (RMS) values of the control input indicate that both the invariance and PPC controllers suffer from large oscillations, which may hinder practical implementation. In contrast, although the ABLF method produces a smoother control effort comparable to the proposed approach, its tracking accuracy deteriorates considerably in the presence of disturbances.  

\subsection{Comparison with State of the Art}To provide a rigorous performance assessment, Table~\ref{tab:my_label} summarizes quantitative error indices, including the Integral Squared Error (ISE), Integral Absolute Error (IAE), Integral of Time-weighted Squared Error (ITSE), and Integral of Time-weighted Absolute Error (ITAE). These metrics jointly capture both transient and steady-state tracking characteristics. The results clearly demonstrate that the proposed method achieves superior accuracy with substantially reduced control chattering, thereby offering a more balanced trade-off between safety, compliance, and feasibility compared to existing constrained control approaches.

 \begin{table}[ht]
    \centering
    \resizebox{\columnwidth}{!}{%
    \begin{tabular}{|c|c|c|c|c|}
    \hline
     Error metric & Proposed & Invariance\cite{hirche}& ABLF \cite{tee2010adaptive} & PPC \cite{meng2024adaptive} \\
     \hline\hline
     ISE & 0.4889 & 0.5046 & 0.4938 & 0.5014\\
     IAE &  4.9256 & 5.0067&  4.9586& 4.9989 \\
     ITSE & 12.3388 & 12.6828 & 12.4210 & 12.5038 \\
     ITAE & 124.046 & 125.8674& 124.5946 & 125.017\\
     \hline
    \end{tabular}%
    }
    \caption{Error Comparison Metric}
    \label{tab:my_label}
\end{table}
\subsection{Experimental Validation}
To experimentally validate the proposed method, we employ a single-link manipulator testbed, illustrated in Fig.~\ref{fig:hardware}. The setup is developed by modifying a Quanser SRV02 system. It consists of a DC motor actuated through a voltage amplifier, which interfaces with a host computer via Quarc and Simulink. The link position is measured using an encoder, while its velocity is obtained from a tachometer. Sensor signals are acquired by the host computer through a data acquisition board. The controller is implemented in Simulink and executed in real time using hardware-in-the-loop (HIL) read/write blocks. The computed control signal is transmitted to the voltage amplifier, which drives the motor and actuates the link.\\
The single link manipulator system can be modeled as
\begin{equation}
    J_{eq}\dot{\omega}(t)+B_{eq}\omega(t) = A_mV_m(t)+\tau_{ext},  \label{eq.hardwareeq}
\end{equation}
where, $\omega$ and $\theta$ represent the angular velocity and angular position of the link, respectively. The system parameters are identified as follows: the equivalent inertia $J_{eq} = 0.0023~\mathrm{kg \cdot m^2}$, the equivalent damping $B_{eq} = 0.0844~\mathrm{N \cdot m \cdot s/rad}$, and the motor input conversion ratio $A_m = 0.129~\mathrm{N \cdot m/V}$. The term $\tau_{ext} \triangleq f_{ext}l$ denotes the external torque applied to the motor during interaction, where $l=0.1525$ m is the link length.
\begin{figure*}[!t]
    \centering
    \includegraphics[width=0.85\textwidth]{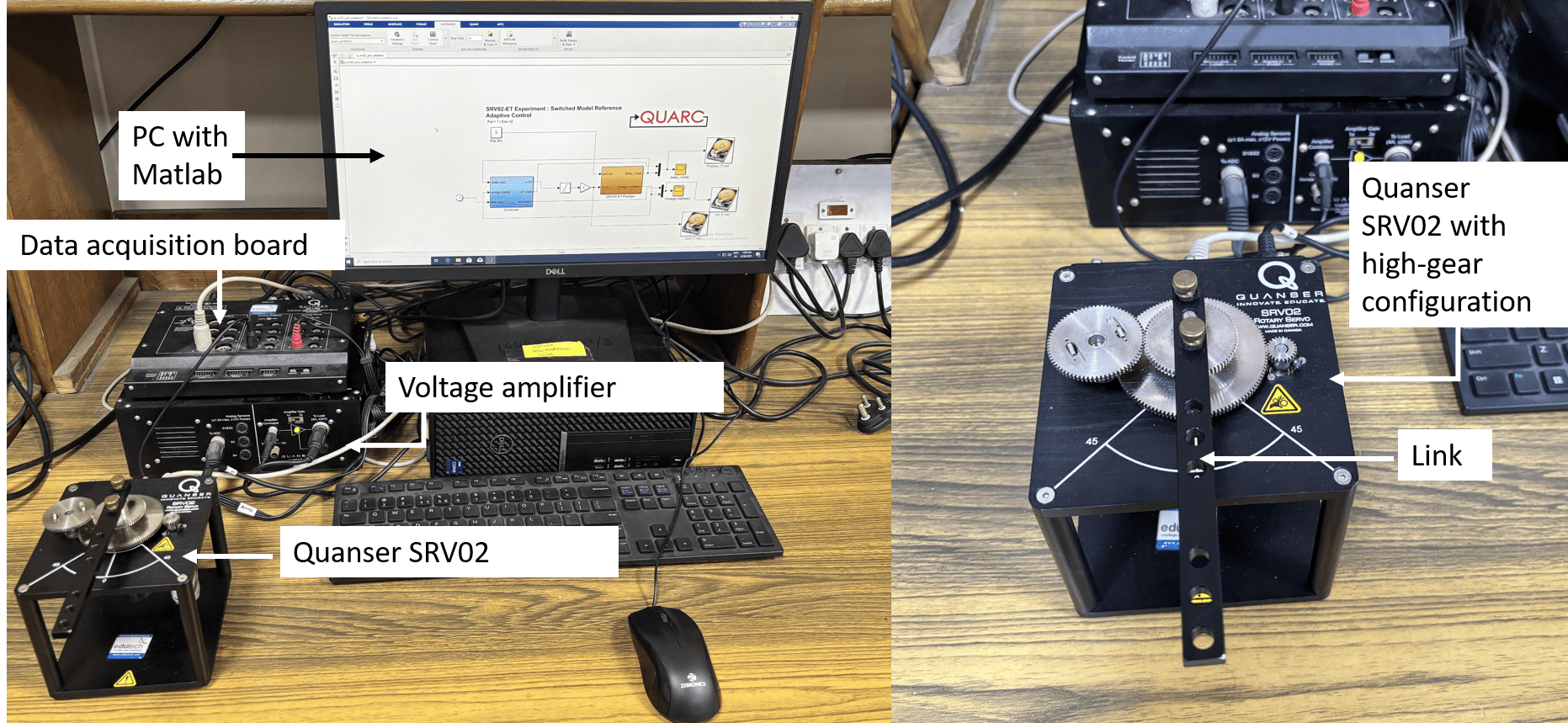}
    \caption{Experimental Hardware}
    \label{fig:hardware}
\end{figure*}
The control objective is to design the input voltage $V_m(t)$ such that the link complies with the external torque $\tau_{ext}$ while ensuring that the safety constraint $\mathcal{S} = \{\theta \in \mathbb{R} : |\theta| \leq 0.25\}$ is satisfied. Since no force measurement device is available in the setup, a residual observer~\cite{magrini2014estimation} is employed and carefully calibrated to estimate the interaction force, as illustrated in Fig.~\ref{fig:hardwareforce}. 
\begin{figure}
    \centering
    \includegraphics[width=0.9
    \linewidth]{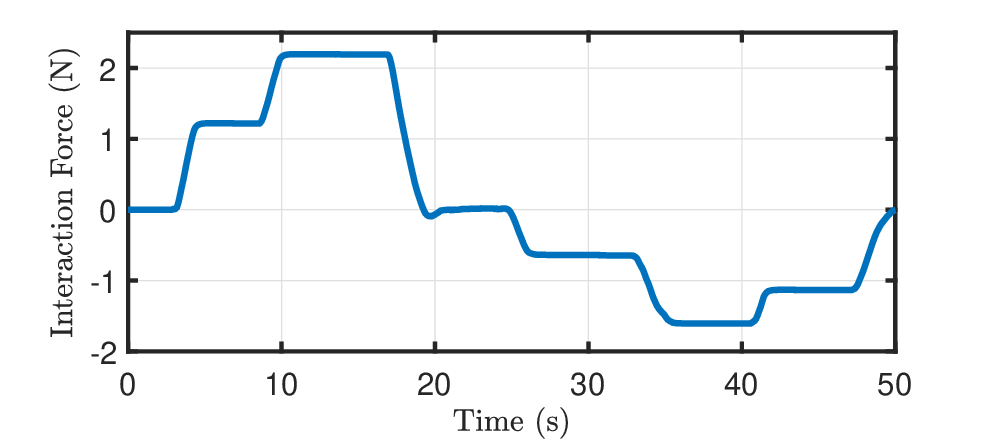}
    \caption{Estimated interaction force ($f_{ext}$)}
    \label{fig:hardwareforce}
\end{figure}
The corresponding admittance model is defined as follows:
\begin{equation}
    A_1=\begin{bmatrix}
        0 & 1\\ -5 & -8
    \end{bmatrix},B_a\begin{bmatrix}
        0 \\ 1
    \end{bmatrix}, A_2=\begin{bmatrix}
        0 & 1\\ -20 & -25
    \end{bmatrix}.
\end{equation}
The noise upper bound is set to $0.05$. Using \eqref{eq.indicatorfn}, the switching surface is designed, and the resulting reference trajectory, along with the actual trajectory, is depicted in Fig.~\ref{fig:harwaretracking}. The actual trajectory remained consistently within the safety set $\mathcal{S}$ with a maximum deviation of less than $0.05$ rad. Compared to baseline controllers, the proposed method exhibited a reduced sensitivity to measurement noise, although at the expense of bounded chattering in the control input, which arises from handling the velocity noise while simultaneously enforcing the safety constraints. Demonstration videos illustrating force estimation and safety achievement using the proposed framework are available online. \footnote{\url{https://shorturl.at/aYlMO}}

\begin{figure}
    \centering
    \includegraphics[width=\linewidth]{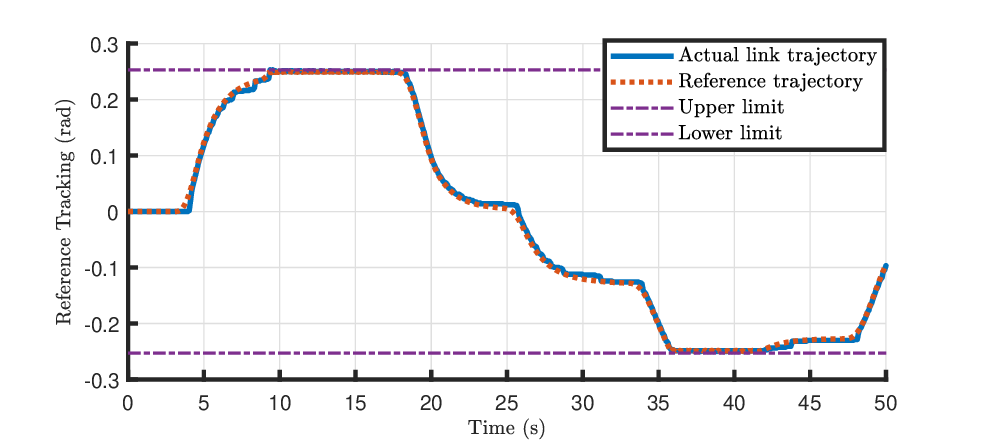}
    \caption{Tracking of reference trajectory by the end-effector maintaining the safety limit}
    \label{fig:harwaretracking}
\end{figure}

\begin{figure}
    \centering
    \includegraphics[width=\linewidth]{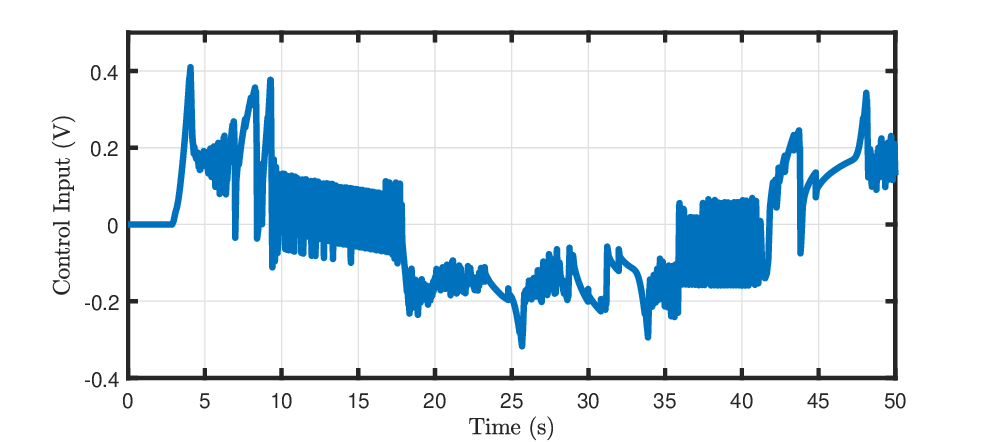}
    \caption{Control input to the single link manipulator}
    \label{fig:controlinput}
\end{figure}

\section{Conclusion}
This paper presented a switched model-reference admittance control framework to ensure both compliance and task-space safety in physical human–robot interaction. By adaptively switching reference models within a variable admittance structure, the proposed approach balances compliance with explicit safety guarantees. An error bound was further incorporated to enhance robustness against disturbances and measurement noise. Theoretical analysis using Lyapunov methods established asymptotic convergence of the switched system, while simulation studies confirmed the method’s ability to satisfy safety constraints with feasible control effort. Comparative evaluations with other controllers demonstrated that the proposed scheme achieves superior tracking accuracy with reduced chattering. Finally, hardware implementation on a single-link manipulator validated the practical feasibility of the approach. Future research will focus on extending the framework to  more complex interaction scenarios, as well as exploring adaptive learning-based switching policies for improved performance under dynamic and uncertain environments.  

\section*{Disclosure statement}
No potential conflict of interest was reported by the author(s).
\section*{Data availability statement}
Data sharing is not applicable to this article as no new data was created or analyzed in this study.

\bibliographystyle{IEEEtran}
\bibliography{root}                           

@article{wolff2004invariance,
  title={Invariance control design for nonlinear control affine systems under hard state constraints},
  author={Wolff, J and Buss, M},
  journal={IFAC Proceedings Volumes},
  volume={37},
  number={13},
  pages={555--560},
  year={2004},
  publisher={Elsevier}
}

@article{craig1987adaptive,
  title={Adaptive control of mechanical manipulators},
  author={Craig, John J and Hsu, Ping and Sastry, S Shankar},
  journal={The International Journal of Robotics Research},
  volume={6},
  number={2},
  pages={16--28},
  year={1987},
  publisher={Sage Publications Sage UK: London, England}
}

@book{blanchini2008set,
  title={Set-theoretic methods in control},
  author={Blanchini, Franco and Miani, Stefano and others},
  volume={78},
  year={2008},
  publisher={Springer}
}

@inproceedings{wolff2007continuous,
  title={Continuous control mode transitions for invariance control of constrained nonlinear systems},
  author={Wolff, Jan and Weber, Carolina and Buss, Martin},
  booktitle={2007 46th IEEE Conference on Decision and Control},
  pages={542--547},
  year={2007},
  organization={IEEE}
}

@article{kimmel2017invariance,
  title={Invariance control for safe human--robot interaction in dynamic environments},
  author={Kimmel, Melanie and Hirche, Sandra},
  journal={IEEE Transactions on Robotics},
  volume={33},
  number={6},
  pages={1327--1342},
  year={2017},
  publisher={IEEE}
}

@article{admittance,
  title={Admittance control for physical human-robot interaction},
  author={Keemink, Arvid QL and van der Kooij, Herman and Stienen, Arno HA},
  journal={The International Journal of Robotics Research},
  volume={37},
  number={11},
  pages={1421--1444},
  year={2018},
  publisher={SAGE Publications Sage UK: London, England}
}

@inproceedings{cbf,
  title={Constrained robot control using control barrier functions},
  author={Rauscher, Manuel and Kimmel, Melanie and Hirche, Sandra},
  booktitle={2016 IEEE/RSJ International Conference on Intelligent Robots and Systems (IROS)},
  pages={279--285},
  year={2016},
  organization={IEEE}
}

@inproceedings{tee2010adaptive,
  title={Adaptive admittance control of a robot manipulator under task space constraint},
  author={Tee, Keng Peng and Yan, Rui and Li, Haizhou},
  booktitle={2010 IEEE International Conference on Robotics and Automation},
  pages={5181--5186},
  year={2010},
  organization={IEEE}
}

@article{meng2024adaptive,
  title={Adaptive admittance tracking control for interactive robot with prescribed performance},
  author={Meng, Qingrui and Lin, Yan},
  journal={Journal of Systems Engineering and Electronics},
  volume={35},
  number={2},
  pages={444--450},
  year={2024},
  publisher={BIAI}
}

@article{Li2018Asymmetric,
  author    = {Li, Z. and Huang, B. and Ajoudani, A. and others},
  title     = {Asymmetric bimanual control of dual-arm exoskeletons for human-cooperative manipulations},
  journal   = {IEEE Transactions on Robotics},
  year      = {2018},
  volume    = {34},
  number    = {1},
  pages     = {264--271},
  doi       = {10.1109/TRO.2017.2776326}
}

@article{Pehlivan2016Minimal,
  author    = {Pehlivan, A. U. and Losey, D. P. and O'Malley, M. K.},
  title     = {Minimal assist-as-needed controller for upper limb robotic rehabilitation},
  journal   = {IEEE Transactions on Robotics},
  year      = {2016},
  volume    = {32},
  number    = {1},
  pages     = {113--124},
  doi       = {10.1109/TRO.2015.2503726}
}

@article{He2017Model,
  author    = {He, W. and Ge, W. and Li, Y. and others},
  title     = {Model identification and control design for a humanoid robot},
  journal   = {IEEE Transactions on Systems, Man, and Cybernetics: Systems},
  year      = {2017},
  volume    = {47},
  number    = {1},
  pages     = {45--57},
  doi       = {10.1109/TSMC.2016.2562505}
}

@inproceedings{magrini2014estimation,
  title={Estimation of contact forces using a virtual force sensor},
  author={Magrini, Emanuele and Flacco, Fabrizio and De Luca, Alessandro},
  booktitle={2014 IEEE/RSJ International Conference on Intelligent Robots and Systems},
  pages={2126--2133},
  year={2014},
  organization={IEEE}
}

@article{abu2020variable,
  title={Variable impedance control and learning—a review},
  author={Abu-Dakka, Fares J and Saveriano, Matteo},
  journal={Frontiers in Robotics and AI},
  volume={7},
  pages={590681},
  year={2020},
  publisher={Frontiers Media SA}
}

@article{song2019tutorial,
  title={A tutorial survey and comparison of impedance control on robotic manipulation},
  author={Song, Peng and Yu, Yueqing and Zhang, Xuping},
  journal={Robotica},
  volume={37},
  number={5},
  pages={801--836},
  year={2019},
  publisher={Cambridge University Press}
}

@article{al2003efficient,
  title={An efficient data-driven fuzzy approach to the motion planning problem of a mobile robot},
  author={Al-Khatib, Mohannad and Saade, Jean J},
  journal={Fuzzy sets and systems},
  volume={134},
  number={1},
  pages={65--82},
  year={2003},
  publisher={Elsevier}
}

@INPROCEEDINGS{chayansafety,
  author={Paul, Chayan Kumar and Shankar Dey, Bhabani and Kar, Indra Narayan},
  booktitle={2024 10th International Conference on Control, Decision and Information Technologies (CoDIT)}, 
  title={Safe Human Robot-Interaction using Switched Model Reference Admittance Control}, 
  year={2024},
  volume={},
  number={},
  pages={1589-1594},
  doi={10.1109/CoDIT62066.2024.10708346}}

@article{Yu2015Human,
  author    = {Yu, H. and Huang, S. and Chen, G. and others},
  title     = {Human-robot interaction control of rehabilitation robots with series elastic actuators},
  journal   = {IEEE Transactions on Robotics},
  year      = {2015},
  volume    = {31},
  number    = {5},
  pages     = {1089--1100},
  doi       = {10.1109/TRO.2015.2457314}
}

@ARTICLE{hirche,
  author={Kimmel, Melanie and Hirche, Sandra},
  journal={IEEE Transactions on Robotics}, 
  title={Invariance Control for Safe Human–Robot Interaction in Dynamic Environments}, 
  year={2017},
  volume={33},
  number={6},
  pages={1327-1342},
  doi={10.1109/TRO.2017.2750697}}

@inproceedings{2008invariance,
  title={Invariance control in robotic applications: Trajectory supervision and haptic rendering},
  author={Scheint, Michael and Wolff, Jan and Buss, Martin},
  booktitle={2008 American Control Conference},
  pages={1436--1442},
  year={2008},
  organization={IEEE}
}

@inproceedings{hogan,
  title={Impedance control: An approach to manipulation},
  author={Hogan, Neville},
  booktitle={American control conference},
  pages={304--313},
  year={1984},
  organization={IEEE}
}

@inproceedings{mariotti2019admittance,
  title={Admittance control for human-robot interaction using an industrial robot equipped with a F/T sensor},
  author={Mariotti, Eleonora and Magrini, Emanuele and De Luca, Alessandro},
  booktitle={2019 International Conference on Robotics and Automation (ICRA)},
  pages={6130--6136},
  year={2019},
  organization={IEEE}
}

@inproceedings{ablf,
  title={Adaptive admittance control of a robot manipulator under task space constraint},
  author={Tee, Keng Peng and Yan, Rui and Li, Haizhou},
  booktitle={2010 IEEE International Conference on Robotics and Automation},
  pages={5181--5186},
  year={2010},
  organization={IEEE}
}

@article{nubert2020safe,
  title={Safe and fast tracking on a robot manipulator: Robust mpc and neural network control},
  author={Nubert, Julian and K{\"o}hler, Johannes and Berenz, Vincent and Allg{\"o}wer, Frank and Trimpe, Sebastian},
  journal={IEEE Robotics and Automation Letters},
  volume={5},
  number={2},
  pages={3050--3057},
  year={2020},
  publisher={IEEE}
}

@article{barbalat,
  title={A {B}arbalat-like lemma with its application to learning control},
  author={Sun, Mingxuan},
  journal={IEEE Transactions on Automatic Control},
  volume={54},
  number={9},
  pages={2222--2225},
  year={2009},
  publisher={IEEE}
}

\end{document}